\definecolor{fgcolor}{rgb}{0.345, 0.345, 0.345}
\tikzset{
    place/.style={
        circle,
        thick,
        draw=black,
        fill=gray!50,
        minimum size=6mm,
    },
        state/.style={
        circle,
        thick,
        draw=blue!75,
        fill=blue!20,
        minimum size=6mm,
    },
}
\newtheorem{theorem}{theorem}
\newtheorem{conjecture}[theorem]{Conjecture}
\newtheorem{proposition}[theorem]{Proposition}
\newtheorem{definition}[theorem]{Definition}
\newtheorem{remark}[theorem]{Remark}
\newtheorem{example}[theorem]{Example}
\newcommand{\Z}{\mathbb Z}
\begin{document}
\title{Connecting Tables with Allowing Negative Cell Counts}

\author{Ruriko Yoshida \and David Barnhill}

\date{}

\maketitle

\abstract{It is well-known that computing a Markov basis for a discrete loglinear model is very hard in general. Thus, we focus on connecting tables in a fiber via a subset of a Markov basis and in this paper, we consider connecting tables if we allow cell counts in each tale to be $-1$.  In this paper we show that if a subset of a Markov basis connects all tables in the fiber which contains a table with all ones, then moves in this subset connect tables in the fiber if we allow cell counts to be $-1$. In addition, we show that in some cases under the no-three-way interaction model,  we can connect tables by all basic moves of $2 \times 2 \times 2$ minors with allowing $X_{ijk} \geq -1$.    We then apply this Markov Chain Monte Carlo (MCMC) scheme to an empirical data on Naval officer and enlisted population. Our computational experiments show it works well and  we end with the conjecture on the no-three-way interaction model.}

\section{Introduction}

Conducting the goodness of fit test for discrete exponential family models with sparse contingency tables is a challenging task since we cannot use asymptotic methods, especially when we have large tables.  To do so, one applies sampling procedures, such as Markov Chain Monte Carlo (MCMC) procedures, from the state space \cite{Besag:1989uq,chen2005,Diaconis:1985kx,Guo:1992fk}. 

Even though Besag and Clifford argued that irreducibility of a Markov chain is not essential in \cite{Besag:1989uq}, that view is not conventional since it is very difficult to understand a chain over sparse tables in high dimensions.
To understand the irreducibility of a Markov chain over the set of all possible contingency tables with given constraints, Diaconis and Sturmfels introduced a notion of a {\em Markov basis} \cite{DS}.  Diaconis and Sturmfels showed that a Markov basis, a generating set of a toric ideal associated with the design matrix for a discrete exponential family model, is the set of moves which guarantee to connect all tables for the given model \cite{DS}.  The most significant benefit in computing a Markov basis for a discrete exponential family model is that if we have a Markov basis for the model, we do not have to re-compute it since moves in a Markov basis connect all tables for any observed tables under the model.  However, it is computationally expensive to compute a Markov basis in terms of the dimension of a table.  In fact, \cite{deloera05} showed that the number of elements in a Markov basis for the no-three-way interaction model can be arbitrarily large.   This means that it is not feasible to compute a Markov basis for large tables under the no-three-way interaction model. 
Then, Hara et al.~conducted study of employing MCMC with a {\em lattice basis} which is much easier to compute than a Markov basis \cite{Hara2011RunningMC}.  However, with a lattice basis, we cannot guarantee that a chain is connected so that the estimated distribution of statistics might be biased.  

To try to circumvent the difficulty of computing a
Markov basis which may be large,
\cite{chen-dinwoodie-dobra-huber2005,Bunea:2000fk} 
studied computing a smaller set of moves, such as a lattice basis, by allowing cell counts of the
contingency table to be negative one ($-1$).  Motivated by their work, in this paper, we consider cases when we can connect a Markov chain if we allow cell counts to be $-1$ including some cases on the {\em no-three-way interaction model} with the {\em basic moves}.

A basic move of a $2 \times 2\times 2$ minor for the no-three-way interaction model is a $I \times J \times K$ contingency table with $\{0, \pm 1\}$ entries such that for distinct $i, i' \in \{1, \ldots , I\}$, distinct $j, j' \in \{1, \ldots , J\}$, and distinct $k, k' \in \{1, \ldots , K\}$,  
\begin{eqnarray*}
i & \begin{array}{|c||c|c|}\hline
 & j & j'\\\hline \hline
k & 1 & -1\\\hline
k' & -1 & 1\\\hline
\end{array}\\
i' & \begin{array}{|c||c|c|}\hline
 & j & j'\\\hline \hline
k & -1 & 1\\\hline
k' & 1 & -1\\\hline
\end{array}\\
\end{eqnarray*}
and all zeros otherwise.  
In 2018, Lee showed that if $I = 3$, $J = 3$, and $K \geq 2$, then 
basic moves connect all tables if we allow cell counts $X_{ijk}$ to be $-1$ when we run a Markov chain. For $I > 3$, $J > 3$, and $K \geq 2$, it is still an open problem that basic moves connect all tables if we allow cell counts $X_{ijk}$ to be $-1$ when we run a Markov chain. 

The contribution of this paper is the following:
\begin{itemize}
    \item We show that under a fixed model, if a set of moves $M$ connects all tables in a sample space, the set of all tables with fixed margins, which contains a table with all cell counts equal to $1$, then $M$ connects all tables in a sample space with any fixed margins if we allow cell counts to be $-1$. 
    \item We show that under the no-three-way interaction model, the set of basic moves of $2\times 2 \times 2$ minors connects tables in a sample space with a fixed margins if we allow cell counts to be $-1$ in the case of $I = 2$.  Then we also show that the set of basic moves connects all tables in the sample space with any fixed margins if we allow cell counts to be $-1$ for $I = 3$ and $J = 4$.  
    \item We apply our result to the Navy personnel dataset obtained from U.S. NAVY DEMOGRAPHIC DATA, 2022.  
\end{itemize} 

This paper is organized as follows:  In Section \ref{sec:basic}, we set up basics for our results in this paper.  Then, in Section \ref{sec:thm}, we show some theoretical conditions allowing us to connect all tables in the given fiber with a subset of a Markov basis if we allow cell counts to be $-1$.  Additionally, we show cases of the three way contingency tables under the no-three-way interaction model that we can connect tables in the fiber by a set of all basic moves while allowing cell counts to be $-1$.
Then we apply our hybrid scheme, that is, a Markov chain with a set of basic moves by allowing cell counts to be negative one combined with SA to an empirical data on U.S. Navy (USN) personnel in Section \ref{sec:comp}.  We end with discussion and open problems.  

\section{Preliminaries}\label{sec:basic}
In this section we set up some basic notation and definitions for our results in this paper.  

Let $A \in \Z^{m\times d}$ and $b \in \Z^m$.  We define $b-$fiber of
$A$ is
\[
\mathcal{F}_{A, b}:= \{ u\in \Z_{\geq 0}^d: A u = b\}.  
\]
For a finite
$\mathcal{M} \subset \ker A \cap \Z^n$, we define the fiber
graph $\mathcal{G}(\mathcal{F}_{A,b}, \mathcal{M})$ of $A$ as graph having the
$b$-fiber as its vertex set and the set 
\[
\{(u, v): u, v \in \mathcal{F}_{A, b}, \, (u - v) \in \pm \mathcal{M}\} 
\]
as its edge set.

\begin{definition}
We fix a matrix $A \in \Z^{m\times d}$.  We call $\mathcal{M} \subset \ker A \cap \Z^n$ as a {\em Markov basis} for a matrix $A$ if the fiber
graph $\mathcal{G}(\mathcal{F}_{A,b}, \mathcal{M})$ of $A$ is connected for any $b \in \Z^m$.
\end{definition}

In this paper, we focus on assessing interaction between three discrete random variables
$X, \, Y, \, Z$ with finite levels, where $X \in \{1, \ldots , I\}$,
$Y \in \{1, \ldots , J\}$, and $Z \in \{1, \ldots , K\}$ for $I, \,
J, \, K \in \mathbb{N}$.  In order to conduct statistical analysis on such
interactions,  a commonly used model is called a {\em log linear
  model}.  A log linear model is a generalized linear model and it
forms a class of discrete exponential family.  

\begin{definition}
Suppose $A \in \Z^{m \times d}$ be a non-negative integral matrix such
that 
\[
\sum_{i = 1}^m a_{i1} = \ldots = \sum_{i = 1}^m a_{id}
\]
where $a_{ij}$ is the $(i, j)$th element of the matrix $A$.  
If its likelihood function $L(\theta| \bf{n})$ for an observation $\bf{n}$ can be written as
\[
L(\theta| \mathbf{n}) = c \cdot \theta ^{\;A\bf{n}}
\]

\noindent where $c$ is a normalized constant, then the model is called a {\em
  log linear model} and we call such matrix $A$ as the {\em design matrix} for a model.
\end{definition}

Applying the notation of the log-linear model notation to the
three-way contingency table, we have the parameterization $n_{ijk}
\sim Pois(\mu_{ijk})$, the Poisson distribution with the parameter $\mu_{ijk}$, such that:
\[
\log(\mu_{ijk})=\lambda+\lambda_i^X+\lambda_j^Y+\lambda_k^Z+\lambda_{ij}^{XY}+\lambda_{ik}^{XZ}+\lambda_{jk}^{YZ}+\lambda_{ijk}^{XYZ}.
\]
If we assume there is no interaction between three categorical
variables, namely, 
\[
\lambda_{ijk}^{XYZ} = 0,
\]
for all $i = 1, \ldots , I$, $j = 1, \ldots , J$, and $k = 1, \ldots ,
K$, then we call this model 
the {\em no-three-way interaction model}.

It is easy to show that under the no-three-way interaction model, we have 
the $b-$fiber of $A$ with a given observed table ${\bf n}$ is 
\[
\mathcal{F}_{A, b}:= \left\{{\bf u} \Big| \sum_{i = 1}^m u_{ijk} = \sum_{i = 1}^m n_{ijk}, \, \sum_{j = 1}^n
u_{ijk} = \sum_{j = 1}^n n_{ijk}, \, \sum_{k = 1}^l u_{ijk} = \sum_{k
  = 1}^l n_{ijk} \right\} =: \{u \in \Z^{IJK}_+| Au = b\},
\]
where $n_{ijk}$ is the cell count for the $(i, j, k)$th cell in ${\bf n}$.
\begin{example}
Suppose we have a $2 \times 2 \times 2$ table such that
\[
\begin{array}{|c|c|}\hline
     X_{111}&X_{112}  \\\hline
     X_{121}&X_{122}  \\\hline
\end{array}
\ \
\begin{array}{|c|c|}\hline
     X_{211}&X_{212}  \\\hline
     X_{221}&X_{222}  \\\hline
\end{array} .
\]
The the design matrix under the no-three-way interaction model is 
\[
A = \left(
\begin{array}{cccccccc}
     1 & 1 & 0& 0 &  0 & 0 & 0& 0  \\
     0 & 0 & 1 & 1 & 0& 0 &  0 & 0 \\
     0 & 0 & 0& 0 &  1 & 1 & 0& 0  \\
     0 & 0 & 0& 0 &  0 & 0 & 1 & 1 \\
     1 & 0 & 1 & 0 & 0 & 0 & 0 & 0\\
     0 & 1 & 0 & 1 & 0 & 0 & 0 & 0\\
     0 & 0 & 0 & 0 & 1 & 0 & 1 & 0\\
     0 & 0 & 0 & 0 & 0 & 1 & 0 & 1\\
     1 & 0 & 0 & 0 & 1 & 0 & 0 & 0\\
     0 & 1 & 0 & 0 & 0& 1 &  0 & 0\\
     0 & 0 & 1 & 0 & 0 & 0 & 1 & 0\\
     0 & 0 & 0 & 1 & 0 & 0 & 0 & 1 \\
\end{array}
\right) .
\]
Then, suppose we have a $2 \times 2 \times 2$ table such that
\[
\begin{array}{|c|c|}\hline
     1&1 \\\hline
     1&1  \\\hline
\end{array}
\ \
\begin{array}{|c|c|}\hline
     1&1 \\\hline
     1&1 \\\hline
\end{array} .
\]
Then the vector of margins is
\[
b = (2, 2, 2, 2, 2, 2, 2, 2, 2, 2, 2, 2)^T ,
\]
where $x^T$ is the transpose of a vector $x \in \mathbb{R}^m$.
\end{example}

Now we define a \em{basic move}.
\begin{definition}
Let $m$ be a $I \times J \times K$ table such that  
\begin{eqnarray*}
i & \begin{array}{|c||c|c|}\hline
 & j & j'\\\hline \hline
k & 1 & -1\\\hline
k' & -1 & 1\\\hline
\end{array}\\
i' & \begin{array}{|c||c|c|}\hline
 & j & j'\\\hline \hline
k & -1 & 1\\\hline
k' & 1 & -1\\\hline
\end{array}\\
\end{eqnarray*}
where $1 \leq i, i' \leq I, \, 1 \leq j, j' \leq
J, \, 1 \leq k, k' \leq K$, $i \not = i'$, $j \not = j'$, and $k \not
= k'$, and other cells are all zero.  We call $m$ a {\em basic move}. 
We denote this table $m$ as $(i, i'; j, j'; k, k')$.
\end{definition}

\section{Connectivity of a Markov Chain via Subset of Markov Basis}\label{sec:thm}

In this section, we show the main result for the connectivity of a Markov chain if we allow cell counts to drop to $-1$. 

\begin{theorem}\label{th:greater}
Suppose $A$ is a $0-1$ matrix and suppose we have a set of moves $M$ such that moves in $M$ connect all points in the fiber $\mathcal{F}_{A, A{\mathbb{I}}}$, where $\mathbb{I} = (1, \ldots , 1)$ the vector with all ones. 
Suppose $\mathcal{F}_{A, b}$ contains a point $x^0:=(x^0_1, \ldots , x^0_n)$ such that $x^0_i \geq 1$ for $i = 1, \ldots , n$.  Then moves in $M$ connect all points in $\mathcal{F}_{A, b}$.
\end{theorem}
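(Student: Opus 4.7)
The plan is a translation argument anchored at the interior point $x^0$ that reduces the claim to the hypothesised connectivity of $\mathcal{F}_{A, A\mathbb{I}}$. It is enough to show that an arbitrary $u \in \mathcal{F}_{A, b}$ can be joined to $x^0$ by a sequence of moves from $M$ whose intermediate tables have entries $\geq -1$. Because $x^0 \geq \mathbb{I}$, the vector $e := x^0 - \mathbb{I}$ is a nonnegative integer vector satisfying $Ae = b - A\mathbb{I}$, so the translation $w \mapsto w + e$ is a bijection between $\{w : Aw = A\mathbb{I}\}$ and $\{u : Au = b\}$ that carries $\mathbb{I}$ to $x^0$. Since moves in $M \subset \ker A$ are translation invariant, any $\pm M$-path valid at the $w$-level remains valid after the shift by $e$.

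When $u \geq e$, the shifted vector $\tilde u := u - e$ lies in $\mathcal{F}_{A, A\mathbb{I}}$. The hypothesis then supplies a path $\mathbb{I} = w_0, w_1, \ldots, w_k = \tilde u$ inside $\mathcal{F}_{A, A\mathbb{I}}$ with each $w_{j+1} - w_j \in \pm M$ and every $w_j \geq 0$. Translating by $+e$ yields a $\pm M$-path $x^0 = w_0 + e, \ w_1 + e, \ \ldots, \ w_k + e = u$ in $\mathcal{F}_{A, b}$ whose intermediate tables all satisfy $w_j + e \geq e \geq 0$, so certainly $\geq -\mathbb{I}$. The $-1$ relaxation is not even needed in this case.

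The delicate case is when some coordinate $u_i$ is strictly less than $x^0_i - 1$, so that $\tilde u$ has a negative entry and escapes $\mathcal{F}_{A, A\mathbb{I}}$. This is precisely where the ability to visit tables with a $-1$ entry enters the argument. I would identify the $\geq -\mathbb{I}$ chain on $\mathcal{F}_{A, b}$ with the ordinary nonnegative chain on the enlarged fiber $\mathcal{F}_{A, b + A\mathbb{I}}$ via the bijection $u \leftrightarrow u + \mathbb{I}$; in the lifted picture the reference point becomes $x^0 + \mathbb{I}$, which is now coordinatewise at least $2\mathbb{I}$, providing one extra unit of slack for the translation argument.

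The principal obstacle is that the hypothesis supplies connectivity only for the single fiber $\mathcal{F}_{A, A\mathbb{I}}$, so the bootstrap from this all-ones fiber to an arbitrary $b$-fiber has to be built by hand and must exploit the fact that $A$ is $0$-$1$: each constraint is a plain subset-sum and each move in $\ker A$ merely reshuffles cell counts inside a fixed support. Concretely, I would show that each negative entry of $\tilde u$ can be pushed back into the nonnegative regime by one $\pm M$-step, using $x^0_i \geq 1$ together with the freedom to momentarily touch a $-1$ entry, so that after finitely many such repairs we land in the favourable case $\tilde u \geq 0$ and the translation step of the second paragraph closes the proof.
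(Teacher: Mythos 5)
Your proposal has a genuine gap, and it sits exactly where the whole content of the theorem lies. The translation step is fine as far as it goes: with $e:=x^0-\mathbb{I}\geq 0$ and $Ae=b-A\mathbb{I}$, any $u\in\mathcal{F}_{A,b}$ with $u\geq e$ pulls back to $\tilde u=u-e\in\mathcal{F}_{A,A\mathbb{I}}$, and the hypothesised path from $\mathbb{I}$ to $\tilde u$ translates to a nonnegative path from $x^0$ to $u$. But a generic point of $\mathcal{F}_{A,b}$ need not dominate $e$ (indeed $\mathcal{F}_{A,b}$ typically contains points with many zero entries even when $x^0$ is large), and for those points you offer only the intention to ``push each negative entry of $\tilde u$ back into the nonnegative regime by one $\pm M$-step.'' Nothing in the hypotheses guarantees that $M$ contains a move that repairs a given negative coordinate while keeping the others admissible, nor that such a repair process terminates; this is precisely the hard part, and it is left unproven.

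Worse, the repair mechanism you invoke --- ``the freedom to momentarily touch a $-1$ entry'' --- is not available in this theorem. Theorem \ref{th:greater} asserts connectivity of the ordinary nonnegative fiber $\mathcal{F}_{A,b}$; the $-1$ relaxation is the content of Theorem \ref{th:main}, which the paper \emph{deduces from} Theorem \ref{th:greater} via the shift $u\mapsto u+\mathbb{I}_n$. Importing that relaxation into the proof of Theorem \ref{th:greater} therefore both proves a weaker statement than claimed and makes the overall logic circular. The paper's own argument takes a completely different, algebraic route: it passes to the binomial ideal $I_M$, shows that connectivity of the all-ones fiber forces the containment $\mathcal{I}_A\cap\bigcap_m I^{b'}_m\subset I_M$, and then invokes Proposition 0.4.1 of \cite{CDY}, which is the general criterion converting such an ideal containment into connectivity of every fiber containing a point with all entries at least one. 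If you want an elementary combinatorial proof, you must either reprove that criterion or find a direct argument for the case $u\not\geq e$; the translation trick alone does not reach it.
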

\begin{proof}
Let 
\[
I_M = \langle x^{m_+} - x^{m_-}: m \in M\rangle .
\]
Let $b':= A \mathbb{I}$.  Let $I_m := \langle x_k\rangle_{A_m, n}$ be the monomial ideal generated by all the indeterminates for the cells that contribute to a margin $m$ and let $I^{b'}_M = \langle x_{i_1}x_{i_2}\ldots x_{i_{b'}}\rangle_{A_{m}, i_k}$.
Suppose $u, v \in \mathcal{F}_{A, b'}$.  Then since $A$ is a $0-1$ matrix, each monomial $x^u$ and $x^v$ are in $I^{b'}_m$.  Since all moves in $M$ connect all points in $\mathcal{F}_{A, b'=A{\mathbb{I}}}$,
\[
\mathcal{I}_A \cap \bigcap_m I^{b'}_m \subset I_M.
\]
Then we apply Proposition 0.4.1 in \cite{CDY}.  
\end{proof}

Then we have our main theorem.
\begin{theorem}\label{th:main}
Suppose $A$ is a $0-1$ matrix and suppose we have a set of moves $M$ such that moves in $M$ connect all points in the fiber $\mathcal{F}_{A, A{\mathbb{I}}}$, where $\mathbb{I} = (1, \ldots , 1)$ the vector with all ones.
Consider a contingency table
$X \in \Z_{\geq 0}^n$.  We assume that $ (b - A
\mathbb{I}_n) \geq 0$, where $A$ is the design matrix for a given model for contingency tables, and $b$ is a vector of the margins.
If we allow $X_{i} \geq -1$, then all moves in $M$ connect all
tables in ${\mathcal{F}}_{A, b}$.
\end{theorem}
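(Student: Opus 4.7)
The plan is to reduce to Theorem~\ref{th:greater} by translating the extended fiber to an ordinary fiber. Write $\mathbb{I} = (1, \ldots , 1) \in \Z^n$ and observe that the map $u \mapsto u + \mathbb{I}$ is a bijection from $\{u \in \Z^n : u \geq -1, \, Au = b\}$ onto $\mathcal{F}_{A, b + A\mathbb{I}}$. Because every move in $M$ lies in $\ker A$, this translation preserves adjacency: a step $u \to u + m$ with $m \in \pm M$ corresponds to $(u + \mathbb{I}) \to (u + \mathbb{I}) + m$, and the entries shift by $1$ uniformly. So connectivity with respect to $M$ transfers between the two fibers.

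First, I would invoke the hypothesis to exhibit $X \in \Z_{\geq 0}^n$ with $AX = b$; then $X + \mathbb{I} \in \mathcal{F}_{A, b + A\mathbb{I}}$ has every entry at least $1$. Theorem~\ref{th:greater}, applied with the margin vector $b + A\mathbb{I}$ in place of $b$ and with $X + \mathbb{I}$ as the required all-$\geq 1$ reference point, then yields that $M$ connects every pair of tables in $\mathcal{F}_{A, b + A\mathbb{I}}$ through intermediate tables with non-negative entries.

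Given any two tables $u, v$ in the extended fiber $\{w \in \Z^n : w \geq -1, \, Aw = b\}$, the shifts $u + \mathbb{I}$ and $v + \mathbb{I}$ both lie in $\mathcal{F}_{A, b + A\mathbb{I}}$ and, by the previous step, are joined by a path $w_0 = u + \mathbb{I}, w_1, \ldots , w_k = v + \mathbb{I}$ with $w_i \geq 0$ and $w_{i+1} - w_i \in \pm M$. Subtracting $\mathbb{I}$ at each step produces the path $w_0 - \mathbb{I}, \ldots , w_k - \mathbb{I}$, which connects $u$ to $v$ using the same moves and whose intermediate tables have entries at least $-1$, as required.

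I do not anticipate a substantive obstacle beyond this translation bookkeeping: once the extended fiber is recognized as a shift of an ordinary fiber, the hypothesis on $X$ supplies exactly the all-$\geq 1$ reference point needed to invoke Theorem~\ref{th:greater}, and the rest is a direct transfer of paths. The assumption $b - A\mathbb{I}_n \geq 0$ is compatible with the setup (it is necessary for a table with margins $b$ whose entries are all $\geq 1$ to exist) but does not otherwise enter this reduction.
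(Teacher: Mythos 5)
Your proposal is correct and follows essentially the same route as the paper: identify the extended fiber $\{u \in \Z^n : Au = b,\ u \geq -\mathbb{I}\}$ with an ordinary fiber via the shift $u \mapsto u + \mathbb{I}$ and then invoke Theorem~\ref{th:greater}. Your bookkeeping is in fact the more careful of the two: the paper's displayed chain of equalities lands on the margin $b - A\mathbb{I}_n$, whereas the shifted fiber has margin $b + A\mathbb{I}_n$ as you write, and you are right that the stated hypothesis $b - A\mathbb{I}_n \geq 0$ plays no essential role once a nonnegative table $X$ with $AX = b$ is assumed to exist, since $X + \mathbb{I}$ already supplies the all-$\geq 1$ reference point that Theorem~\ref{th:greater} requires.
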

\begin{proof}
We consider the fiber 
\[
\begin{array}{ccc}
\bar{\mathcal{F}}_{A, b}&:= &\{ u\in \Z^n: A u = b, \, u \geq -\mathbb{I}_n\}\\
&=& \{ u + \mathbb{I}_n\in \Z_{\geq 0}^n: A (u +
\mathbb{I}_n)= b\}\\
&= & \{ u\in \Z_{\geq 0}^n: A u = (b - A
\mathbb{I}_n)\}.\\
\end{array}
\]
Since we have $ (b - A
\mathbb{I}_n) \geq 0$,  this means that $b \geq A\mathbb{I}_n$,
Then we apply Theorem \ref{th:greater}. 
\end{proof}
We apply Theorem \ref{th:main} to the no-three-way interaction model.
\begin{proposition}\label{pro:main}
    Consider the design matrix for three-way contingency tables under the no-three-way interaction model.  suppose we have a set of moves $M$ such that moves in $M$ connect all tables in the fiber which contains the table with all cell counts equal to one.  Then, if we allow $X_{ijk} \geq -1$ for all $i = 1, \ldots , I$, $j = 1, \ldots , J$, and $k = 1, \ldots , K$, then M connects all tables in the fiber. 
\end{proposition}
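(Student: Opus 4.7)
The plan is to recognize Proposition \ref{pro:main} as a direct specialization of Theorem \ref{th:main} to the no-three-way interaction model, so the proof reduces to verifying that the hypotheses of that theorem hold in this setting. The only non-trivial item to check is that the design matrix $A$ for the no-three-way interaction model is a $0$-$1$ matrix. This follows immediately from the form of the sufficient statistics: each row of $A$ encodes one of the two-way marginal sums $\sum_k n_{ijk}$, $\sum_j n_{ijk}$, or $\sum_i n_{ijk}$, so every entry is the indicator of whether a particular $(i,j,k)$ cell contributes to a given margin. The $2\times 2\times 2$ example already displayed in Section \ref{sec:basic} makes this structure transparent.

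Once the $0$-$1$ property is in hand, the hypothesis that $M$ connects the fiber $\mathcal{F}_{A, A\mathbb{I}}$ containing the all-ones table is exactly the hypothesis assumed in the proposition, so Theorem \ref{th:main} applies and yields the desired conclusion: moves in $M$ connect all tables in the extended fiber $\{u \in \Z^{IJK} : Au = b,\ u \geq -\mathbb{I}\}$ whenever $b - A\mathbb{I} \geq 0$. In the language of three-way margins, this slack condition merely says that every $XY$, $XZ$, and $YZ$ two-way margin is at least $K$, $J$, and $I$ respectively, which is precisely the amount by which one can uniformly shift each cell count down by one and still have a non-negative table.

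I do not expect any real obstacle, since the proposition is presented as a corollary of Theorem \ref{th:main} rather than an independent result. The only point where a little care is needed is in interpreting the phrase ``$M$ connects all tables in the fiber'' in the conclusion as referring to the enlarged graph on $\bar{\mathcal{F}}_{A,b} = \{u : Au = b,\ u \geq -\mathbb{I}\}$ with edges given by $\pm$-moves from $M$, which matches exactly the set introduced inside the proof of Theorem \ref{th:main}.
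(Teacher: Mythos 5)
Your proposal is correct and follows exactly the paper's route: the paper's entire proof of Proposition \ref{pro:main} is ``We simply apply Theorem \ref{th:main}.'' You add the useful (and implicitly assumed) verifications that the no-three-way-interaction design matrix is a $0$-$1$ matrix and that the slack condition $b - A\mathbb{I} \geq 0$ must be read into the statement, which the paper glosses over.
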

\begin{proof}
    We simply apply Theorem \ref{th:main}.
\end{proof}

Note that we cannot apply Proposition \ref{pro:main} to the set of all basic moves since the set of basic moves does not connect tables in the fiber which contains a table with all ones.  As an example, we consider the case of $I = J = K = 3$.  
Suppose we have a $3 \times 3\times 3$ table such that
\[
\begin{array}{|c|c|c|}\hline
     3&0 & 0 \\\hline
     0&3 & 0  \\\hline
     0 & 0 & 3\\\hline
\end{array}
\ \
\begin{array}{|c|c|c|}\hline
     0&3 & 0 \\\hline
     0&0 & 3  \\\hline
     3 & 0 & 0\\\hline
\end{array} 
\ \
\begin{array}{|c|c|c|}\hline
     0&0 & 3 \\\hline
     3&0 & 0  \\\hline
     0 & 3 & 0\\\hline
\end{array} .
\]
Then note that the set of basic moves does not connect this table to any other tables with the same margins.  However, we can show that if we allow cell counts to be $-1$, then we can connect this table to other tables in the fiber $\mathcal{F}_{A, b}$  \cite{10945-59705}.
Therefore, in the next section, we consider some cases which we can connect three-way contingency tables under the no-three-way interaction model with the set of basic moves by allowing cell counts to be $-1$.

\section{Connecting $I \times J\times K$ Tables with Basic Moves}

In this section, we consider three-way $I \times J\times K$ tables under the no-three-way interaction model with the set of basic moves. 

Note that, by Lemma 12.2 in \cite{Sturmfels1995GrobnerBA}, for some positive integer $t$, if we allow $X_{ijk} \geq -t$, then all moves in $M$ connect all
tables in $\mathcal{F}_{A, b}$ for a three way $I\times J\times K$ contingency table
$X_{ijk}$ under the no-three way interaction model.

\begin{remark}
Suppose ${\bf b}={\bf b}_+ - {\bf b}_-$ is in a Markov basis for a matrix $A$. If we can show that a set of basic moves connect from ${\bf b}_+$ and ${\bf b}_-$ for any move ${\bf b}={\bf b}_+ - {\bf b}_-$ in a Markov basis by allowing cell counts to be negative one, then basic moves connect all tables in $\mathcal{F}_{A, b}$ for any $b \in \mathbb{Z}^m_{\geq 0}$ by allowing cell counts to be $-1$.
\end{remark}

\begin{example}
First we consider a $3 \times 4 \times 6$ table and we consider an
``indispensable move'', which is a necessary move in all Markov bases for a fixed matrix $A$, such that
\begin{eqnarray*}
& \begin{array}{|c|c|c|c|c|c|}\hline
+1 & -1 & 0 & 0 & 0 & 0\\\hline 
0 & +1 & -1 & 0 & 0 & 0\\\hline
0 & 0 & +1 & 0 & 0 & -1\\\hline
-1 & 0 & 0 & 0 & 0 & +1\\\hline
\end{array} \, \, 
\begin{array}{|c|c|c|c|c|c|}\hline
0 & +1 & 0 & -1 & 0 & 0\\\hline 
0 & -1 & 0 & 0 & +1 & 0\\\hline
0 & 0 & 0 & +1 & 0 & -1\\\hline
0 & 0 & 0 & 0 & -1 & +1\\\hline
\end{array}\, \, 
\begin{array}{|c|c|c|c|c|c|}\hline
-1 & 0 & 0 & +1 & 0 & 0\\\hline 
0 & 0 & +1 & 0 & -1 & 0\\\hline
0 & 0 & -1 & -1 & 0 & +2\\\hline
+1 & 0 & 0 & 0 & +1 & -2\\\hline
\end{array}\\
= & 
\begin{array}{|c|c|c|c|c|c|}\hline
1 & 0 & 0 & 0 & 0 & 0\\\hline 
0 & 1 & 0 & 0 & 0 & 0\\\hline
0 & 0 & 1 & 0 & 0 & 0\\\hline
0 & 0 & 0 & 0 & 0 & 1\\\hline
\end{array} \, \, 
\begin{array}{|c|c|c|c|c|c|}\hline
0 & 1 & 0 & 0 & 0 & 0\\\hline 
0 & 0 & 0 & 0 & 1 & 0\\\hline
0 & 0 & 0 & 1 & 0 & 0\\\hline
0 & 0 & 0 & 0 & 0 & 1\\\hline
\end{array}\, \, 
\begin{array}{|c|c|c|c|c|c|}\hline
0 & 0 & 0 & 1 & 0 & 0\\\hline 
0 & 0 & 1 & 0 & 0 & 0\\\hline
0 & 0 & 0 & 0 & 0 & 2\\\hline
1 & 0 & 0 & 0 & 1 & 0\\\hline
\end{array}\\
- & 
\begin{array}{|c|c|c|c|c|c|}\hline
0 & 1 & 0 & 0 & 0 & 0\\\hline 
0 & 0 & 1 & 0 & 0 & 0\\\hline
0 & 0 & 0 & 0 & 0 & 1\\\hline
1 & 0 & 0 & 0 & 0 & 0\\\hline
\end{array} \, \, 
\begin{array}{|c|c|c|c|c|c|}\hline
0 & 0 & 0 & 1 & 0 & 0\\\hline 
0 & 1 & 0 & 0 & 0 & 0\\\hline
0 & 0 & 0 & 0 & 0 & 1\\\hline
0 & 0 & 0 & 0 & 1 & 0\\\hline
\end{array}\, \, 
\begin{array}{|c|c|c|c|c|c|}\hline
1 & 0 & 0 & 0 & 0 & 0\\\hline 
0 & 0 & 0 & 0 & 1 & 0\\\hline
0 & 0 & 1 & 1 & 0 & 0\\\hline
0 & 0 & 0 & 0 & 0 & 2\\\hline
\end{array}\\
=: & {\bf b^1_{+}} - {\bf b^1_{-}}.
\end{eqnarray*}

If we allow $X_{ijk} \geq -1$ for $1 \leq i \leq I, 1 \leq j \leq J, 1\leq
k \leq K$, then we can show that 
\[
\begin{array}{cc}
   {\bf b^1_{-}} =   &  {\bf b^1_{+}} - (2,3;3, 4; 5,6) -(1,3;3,4;1,6) +
 (2,3;2,3;3,5)\\
     & - (2,3;1,3;1,4) - (1,2;2,3;2,3) - (1,2;1,3;1,2).
\end{array}
\]

In addition we consider another indispensable move such that
\begin{eqnarray*}
& \begin{array}{|c|c|c|c|c|c|}\hline
+1 & -1 & 0 & 0 & 0 & 0\\\hline 
0 & +1 & -1 & 0 & 0 & 0\\\hline
0 & 0 & +1 & -1 & 0 & 0\\\hline
-1 & 0 & 0 & +1 & 0 & 0\\\hline
\end{array} \, \, 
\begin{array}{|c|c|c|c|c|c|}\hline
-1 & 0 & 0 & 0 & 0 & +1\\\hline 
0 & 0 & +1 & 0 & -1 & 0\\\hline
0 & 0 & -1 & 0 & 0 & +1\\\hline
+1 & 0 & 0 & 0 & +1 & -2\\\hline
\end{array}\, \, 
\begin{array}{|c|c|c|c|c|c|}\hline
0 & +1 & 0 & 0 & 0 & -1\\\hline 
0 & -1 & 0 & 0 & +1 & 0\\\hline
0 & 0 & 0 & +1 & 0 & -1\\\hline
0 & 0 & 0 & -1 & -1 & +2\\\hline
\end{array}\\
= & 
\begin{array}{|c|c|c|c|c|c|}\hline
1 & 0 & 0 & 0 & 0 & 0\\\hline 
0 & 1 & 0 & 0 & 0 & 0\\\hline
0 & 0 & 1 & 0 & 0 & 0\\\hline
0 & 0 & 0 & 1 & 0 & 0\\\hline
\end{array} \, \, 
\begin{array}{|c|c|c|c|c|c|}\hline
0 & 0 & 0 & 0 & 0 & 1\\\hline 
0 & 0 & 1 & 0 & 0 & 0\\\hline
0 & 0 & 0 & 0 & 0 & 1\\\hline
1 & 0 & 0 & 0 & 1 & 0\\\hline
\end{array}\, \, 
\begin{array}{|c|c|c|c|c|c|}\hline
0 & 1 & 0 & 0 & 0 & 0\\\hline 
0 & 0 & 0 & 0 & 1 & 0\\\hline
0 & 0 & 0 & 1 & 0 & 0\\\hline
0 & 0 & 0 & 0 & 0 & 2\\\hline
\end{array}\\
- & 
\begin{array}{|c|c|c|c|c|c|}\hline
0 & 1 & 0 & 0 & 0 & 0\\\hline 
0 & 0 & 1 & 0 & 0 & 0\\\hline
0 & 0 & 0 & 1 & 0 & 0\\\hline
1 & 0 & 0 & 0 & 0 & 0\\\hline
\end{array} \, \, 
\begin{array}{|c|c|c|c|c|c|}\hline
1 & 0 & 0 & 0 & 0 & 0\\\hline 
0 & 0 & 0 & 0 & 1 & 0\\\hline
0 & 0 & 1 & 0 & 0 & 0\\\hline
0 & 0 & 0 & 0 & 0 & 2\\\hline
\end{array}\, \, 
\begin{array}{|c|c|c|c|c|c|}\hline
0 & 0 & 0 & 0 & 0 & 1\\\hline 
0 & 1 & 0 & 0 & 0 & 0\\\hline
0 & 0 & 0 & 0 & 0 & 1\\\hline
0 & 0 & 0 & 1 & 1 & 0\\\hline
\end{array}\\
=: & {\bf b^2_{+}} - {\bf b^2_{-}}.
\end{eqnarray*}

If we allow $X_{ijk} \geq -1$ for $1 \leq i \leq I, 1 \leq j \leq J, 1\leq
k \leq K$, then we can show that 
\[
\begin{array}{cc}
  {\bf b^2_{-}} =     &  {\bf b^2_{+}} + (2,3;2, 4; 5,6) +(2,3;3,4;4,6) +
 (2,3;1,2;2,6)\\
     & - (1,2;3,4;1,4) - (1,2;1,3;1,3) + (1,2;1,2;2,3).
\end{array}
\]

\end{example}

\begin{remark}\label{rm:indis}
One notices that by the way we use basic moves to connect from ${\bf
  b_+^1}$ to ${\bf b_-^1}$ and ${\bf b_+^2}$ to ${\bf b_-^2}$ we
cannot connect by allowing just one cell to be $-1$.  We have to allow at
least three cells to be $-1$.  So it is interesting to find out
the upper bounds $K$ such that all moves in $M$ connect all
tables in $F$ if  we allow at most $K$ cells $X_{ijk} \geq -1$.
\end{remark}

Dinwoodie et al.~discussed connecting tables via allowing cell counts to be negative one and they discussed a condition for a set of moves so that it connects tables in the fiber by allowing cell counts to be negative one (Proposition 0.2.1 \cite{Ian2008}).  However, for the no-three-way interaction model, the {\em ideal} generated by binomials with exponents for the positive part and negative part of all possible basic moves for $3 \times 3 \times 4$ is not radical.  Therefore, we cannot use Proposition 0.2.1 for the connectivity of tables with allowing cell counts to be $-1$. However, using a known Markov basis for $3 \times 3 \times K$, for $K \geq 2$, by Aoki and Takemura in \cite{aoki-takemura-2003anz}, we can show that basic moves of all $2 \times 2 \times 2$ minors can connect tables for $3 \times 3 \times 4$ case.

We can also show that a set of basic moves $M$ connects all tables in the fiber ${\mathcal{F}}_{A, b}$ for $A$ a design matrix for the model and $b$ a vector of the margins if we allow a cell count $X_{ijk} \geq -1$ under the no-three-way interaction model.
To prove this, we use elements of all indispensable moves for  $3 \times 4\times K$ tables in \cite{Aoki03thelist}.  Then as we discussed in Remark \ref{rm:indis}, we can show the positive component and negative component of each indispensable move can be connected via basic moves if we allow cell counts to be $-1$.
\begin{proposition}
Consider a $3 \times 4\times K$ table $X$ for $K = 2, \ldots $ under the no-three-way interaction model.  Then a set of basic moves $M$ connects all tables in the fiber ${\mathcal{F}}_{A, b}$ for $A$, a design matrix for the model, and $b$, a vector of the margins, if we allow cell counts $X_{ijk} \geq -1$. 
\end{proposition}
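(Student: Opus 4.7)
The plan is to invoke the remark preceding the proposition: it suffices to show that for every move ${\bf b} = {\bf b}_+ - {\bf b}_-$ in some Markov basis for the $3 \times 4 \times K$ no-three-way interaction model, the tables ${\bf b}_+$ and ${\bf b}_-$ (which automatically have the same two-dimensional margins) can be joined by a walk of basic moves that visits only tables whose cell counts never drop below $-1$. Once this is verified for every generator of a Markov basis, connectivity on every fiber $\mathcal{F}_{A,b}$ under the $X_{ijk}\geq -1$ relaxation is immediate by the remark.

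First I would fix a concrete Markov basis. Aoki and Takemura in \cite{Aoki03thelist} have classified the indispensable moves for the $3 \times 4 \times K$ no-three-way interaction model, and the unique minimal Markov basis coincides with this list. Up to permutations of the levels of $X$, $Y$, and $Z$ (and the trivial observation that an unused $K$-slice contributes only zero layers), the indispensable moves fall into finitely many ``shapes,'' independent of $K$. So the problem reduces to checking a finite list of representative moves.

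Second, for each representative ${\bf b} = {\bf b}_+ - {\bf b}_-$ I would exhibit an explicit sequence of basic moves $(i,i';j,j';k,k')$ carrying ${\bf b}_+$ to ${\bf b}_-$ while respecting $X_{ijk}\geq -1$ at every intermediate stage, exactly as in the two worked examples immediately preceding the proposition. Each such verification is a finite, mechanical check: start from ${\bf b}_+$, add or subtract a basic move, verify that cells which become negative stay at $-1$, and continue until ${\bf b}_-$ is reached. The two displayed examples already treat the representative moves of smallest degree, and the remaining representatives from the Aoki--Takemura list can be handled in the same style.

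The main obstacle will be twofold. Bookkeeping is the first difficulty: the indispensable list for $3 \times 4 \times K$ is long, some moves carry $\pm 2$ entries, and the decompositions must simultaneously stay in the fiber and avoid cell values below $-1$; as noted in Remark \ref{rm:indis}, more than one cell may need to sit at $-1$ simultaneously at intermediate stages. The second, more conceptual difficulty is justifying once and for all that a finite list of representatives suffices for every $K\geq 2$. This is handled by combining the $S_K$-symmetry of the model on the $K$-axis with the Aoki--Takemura classification, which guarantees that any indispensable move uses at most a bounded number of $K$-layers; the unused layers of the ambient $3\times 4\times K$ table contribute nothing to the cell-count constraints and can be ignored when exhibiting the basic-move walk. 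This reduction, combined with the case-by-case verifications, completes the argument.
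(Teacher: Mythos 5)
Your proposal follows essentially the same route as the paper: it reduces the problem, via the remark preceding the proposition, to connecting ${\bf b}_+$ to ${\bf b}_-$ for each indispensable move in the Aoki--Takemura list for $3\times 4\times K$ tables, and then checks each representative shape by an explicit sequence of basic moves keeping every cell at least $-1$. The paper's own argument does exactly this, exhibiting explicit basic-move decompositions for each type of indispensable move, so your approach matches it in both structure and substance.
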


In addition, we can show that a Markov chain with the set of all basic moves by allowing all cell counts to be negative one if $I = 2$. 
\begin{proposition}\label{pro1}
    Let $M$ be a set of moves. Suppose all moves in $M$ connect all
tables in ${\mathcal{F}}_{A, b}$ if $b > 0$.  Then $M$ connects all tables in ${\mathcal{F}}_{A, b'}$ if we allow cell counts to be $-1$.
\end{proposition}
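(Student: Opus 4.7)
The plan is to reduce the $-1$-allowed fiber to a standard nonnegative fiber by the same shift trick used in the proof of Theorem \ref{th:main}. Given $b'$, set $\tilde b := b' + A\mathbb{I}_n$ and consider the map $\phi : u \mapsto u + \mathbb{I}_n$. Mirroring the computation in Theorem \ref{th:main}, $\phi$ restricts to a bijection
\[
\bar{\mathcal{F}}_{A, b'} := \{u \in \Z^n : Au = b',\ u \geq -\mathbb{I}_n\} \;\longrightarrow\; \mathcal{F}_{A, \tilde b} = \{v \in \Z_{\geq 0}^n : Av = \tilde b\},
\]
since $Au = b'$ iff $A(u+\mathbb{I}_n) = \tilde b$, and $u \geq -\mathbb{I}_n$ iff $u+\mathbb{I}_n \geq 0$.

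Next I would observe that $\phi$ is $M$-equivariant. Every $m \in M$ lies in $\ker A$, so $\phi(u + m) = \phi(u) + m$, and the membership constraints transform consistently: the intermediate table $u + m$ lies in $\bar{\mathcal{F}}_{A, b'}$ iff $\phi(u) + m$ lies in $\mathcal{F}_{A, \tilde b}$. Therefore two elements of $\bar{\mathcal{F}}_{A, b'}$ are $M$-connected precisely when their $\phi$-images are $M$-connected in the nonnegative fiber $\mathcal{F}_{A, \tilde b}$.

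It then suffices to verify that the hypothesis applies to the margin $\tilde b$, i.e.\ that $\tilde b > 0$ componentwise. Since $A$ is the design matrix of a log-linear model no row of $A$ is identically zero (every coordinate of the sufficient statistic involves at least one cell), so $A\mathbb{I}_n > 0$. Combined with the standing assumption that contingency-table margins are nonnegative, i.e.\ $b' \geq 0$, this gives $\tilde b = b' + A\mathbb{I}_n > 0$. Applying the hypothesis to $\mathcal{F}_{A, \tilde b}$ and pulling the connectivity back through $\phi$ yields the claim.

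The only point requiring any real care is the strict positivity $\tilde b > 0$, which is where the strict (rather than merely nonnegative) form of the hypothesis gets used and where the no-zero-rows property of $A$ enters; everything else is bijective bookkeeping exactly analogous to the reduction already carried out in Theorem \ref{th:main}.
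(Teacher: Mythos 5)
Your proof is correct and follows essentially the same route as the paper: shift by $\mathbb{I}_n$ to identify the fiber with cells allowed to be $-1$ with an ordinary nonnegative fiber, note the shift commutes with applying moves of $M$, and invoke the positive-margin hypothesis on the shifted fiber. The one substantive difference is that you get the bookkeeping right where the paper is loose: the shifted fiber has margin $b' + A\mathbb{I}_n$ (not $b' - A\mathbb{I}_n$ as written in the paper's display), and you supply the missing justification for its strict positivity (namely $b' \geq 0$ together with $A\mathbb{I}_n > 0$ because the design matrix has no zero rows), whereas the paper simply asserts positivity of $b - A\mathbb{I}_n$, which is not among the proposition's hypotheses. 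In that sense your write-up is the corrected version of the paper's argument rather than a different one.
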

\begin{proof}
    We consider the fiber 
\[
\begin{array}{ccc}
\bar{\mathcal{F}}_{A, b}&:= &\{ u\in \Z^n: A u = b, \, u \geq -\mathbb{I}_n\}\\
&=& \{ u + \mathbb{I}_n\in \Z_{\geq 0}^n: A (u +
\mathbb{I}_n)= b\}\\
&= & \{ u\in \Z_{\geq 0}^n: A u = (b - A
\mathbb{I}_n)\}.\\
\end{array}
\]
Since we have $ (b - A
\mathbb{I}_n) > 0$,  $M$ connects all tables in $\bar{\mathcal{F}}_{A, b}$.
\end{proof}

\begin{theorem}\label{th:2JK}
    Suppose we consider $2 \times J \times K$ contingency tables under the no-three-way interaction model.  Then the set of all basic moves connects tables in the fiber ${\mathcal{F}}_{A, b}$ for $A$, a design matrix for the model, and $b$, a vector of the margins, if we allow cell counts $X_{ijk} \geq -1$. 
\end{theorem}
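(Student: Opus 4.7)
My plan is to invoke Proposition \ref{pro1} and thereby reduce the statement to the following claim: the basic $2\times 2\times 2$ moves connect $\mathcal{F}_{A,b}$ for every strictly positive margin vector $b>0$ in the $2\times J\times K$ no-three-way interaction model. The $I=2$ structure then turns this into a question about a classical bounded transportation problem on $J\times K$ tables.

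Concretely, since $X_{1jk}+X_{2jk}=n_{\cdot jk}$, a table in $\mathcal{F}_{A,b}$ is determined by its first layer $Y:=(X_{1jk})_{j,k}$, which must satisfy fixed row sums $\sum_j Y_{jk}=n_{1\cdot k}$, fixed column sums $\sum_k Y_{jk}=n_{1j\cdot}$, and cell bounds $0\leq Y_{jk}\leq n_{\cdot jk}=:u_{jk}$. Strict positivity $b>0$ forces $u_{jk}\geq 1$ and positive row and column sums, and the basic $2\times 2\times 2$ move $(1,2;j,j';k,k')$ acts on $Y$ as the classical $2\times 2$ swap. The reduced claim is thus: $2\times 2$ swaps connect every $J\times K$ integer transportation fiber with componentwise upper bounds $u\geq 1$. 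I would prove this by a Diaconis--Sturmfels-style descent. Given distinct $Y,Y'$ with the same margins, the difference $D=Y-Y'$ has zero row and column sums, so choosing $(j,k_1)$ with $D_{jk_1}>0$, then $k_2$ with $D_{jk_2}<0$ and $j'\neq j$ with $D_{j'k_1}<0$, the basic swap with signs $(-,+,+,-)$ at $(j,k_1),(j,k_2),(j',k_1),(j',k_2)$ decreases $\|D\|_1$ by at least $2$ (three of the four coordinate changes are guaranteed to be $-1$, and the fourth is at most $+1$). At those three ``sign-matched'' cells the swap is legal because $Y$ lies strictly between $Y'$ and the appropriate bound ($0$ below or $u$ above); iterating drives $\|D\|_1$ to $0$ and yields a path of basic swaps from $Y$ to $Y'$.

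The main obstacle is the fourth cell $(j',k_2)$: the sign of $D_{j'k_2}$ is unconstrained, so the swap may push $Y_{j'k_2}$ outside $[0,u]$ if this cell is already extreme. I expect to handle this by a case analysis on $\mathrm{sgn}(D_{j'k_2})$, flipping the swap to signs $(+,-,-,+)$ when that is favorable and pivoting to alternative row or column indices when neither orientation of the immediate swap is feasible. The strict positivity of $u$ and of the row/column sums is what guarantees that at each iteration some feasible reducing swap exists; making this combinatorial existence argument airtight, perhaps by tracking an alternating cycle in the bipartite support graph of $D$ and extracting from it a basic swap with controllable support values, is the delicate core of the proof. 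Once this descent is established, combining it with Proposition \ref{pro1} completes the argument.
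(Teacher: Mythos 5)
Your reduction is exactly the paper's: apply Proposition~\ref{pro1} to pass to the positive-margin case, then observe that a $2\times J\times K$ table under the no-three-way interaction model is equivalent to a single $J\times K$ layer $Y$ with fixed row and column sums and cell bounds $0\le Y_{jk}\le n_{\cdot jk}$, on which the basic moves act as $2\times 2$ swaps. Where you diverge is the final step: the paper does not prove the connectivity of bounded two-way fibers by swaps at all --- it cites Theorem~3 of \cite{RY}, which establishes precisely that the basic moves connect bounded $J\times K$ tables under the independence model when the margins are positive. You instead attempt to prove that statement from scratch by an $\ell_1$-descent on $D=Y-Y'$.

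That attempted proof has a genuine gap, and you have correctly located it yourself: the fourth cell $(j',k_2)$ of the chosen swap. Three cells of the swap are ``sign-matched'' and hence feasible, but nothing forces the fourth to be, and when $Y_{j'k_2}$ sits at the relevant bound and $D_{j'k_2}$ has the unhelpful sign, neither orientation of that particular swap is legal. Your proposed remedies (flipping signs, pivoting to other rows/columns, tracking an alternating cycle in the support of $D$) are exactly where the difficulty lives, and you do not carry them out; moreover, your sketch never actually uses the positive-margin hypothesis beyond noting $u_{jk}\ge 1$ --- in particular it never invokes positivity of the margins of the \emph{complement} layer $u-Y$ (equivalently, of the second layer $X_{2jk}$), which is part of what ``positive margins'' delivers in this setting and is what the cited result relies on. Connectivity of bounded two-way fibers by swaps is false without such hypotheses, so a correct argument must use them in an essential way. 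As written, your proof is a correct reduction followed by an unproven (and nontrivial) combinatorial claim; to complete it you should either close the descent argument rigorously or, as the paper does, invoke Theorem~3 of \cite{RY}.
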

\begin{proof}
    Theorem 3 in \cite{RY} shows that the set of basic moves connects bounded $J \times K$ tables  under the independence model if we have positive margins.  This is equivalent that the set of all basic moves connect tables for $2 \times J \times K$ tables under the no-three-way interaction model with all positive margins.  Then we apply Proposition \ref{pro1} we have the result. 
\end{proof}

In general we do not know that a set of basic moves of $2 \times 2 \times 2$ minors connects all tables in $\mathcal{F}_{A, b}$ under the no-three-way interaction model for any $b \in \mathbb{Z}^{m}_{\geq 0}$ if we allow cell counts to be $-1$.  
Thus, it is still an open problem to prove that moves of $2\times 2\times 2$ minors connects all $I \times J \times K$ tables in the fiber for $I > 3, \, J > 3, \, K>3$ under the no-three-way interaction model if we allow cell counts to be negative one.  

\section{USN Officer and Enlisted Population by Race, Rank, and Gender}\label{sec:comp}

We now examine the performance of the algorithms using an empirical data set from \cite{NavyData}.  The data set consists of the USN officer and enlisted population as of January 20, 2021.  The data is organized into a three way contingency table stratified by race, rank, and gender.  The rank category consists of 19 levels.  The values are displayed in Tables~\ref{tab:maloff},~\ref{tab:malenl},~\ref{tab:femoff}, and~\ref{tab:femenl}. By Theorem \ref{th:2JK}, we know that the set of all basic moves connects tables in the fiber if we allow cell counts to be $-1$.  Thus we apply our MCMC scheme to this contingency table.

\begin{table}[H]
\caption{Male Naval Officer Population by Race and Rank as of January 20,2021}\label{tab:maloff}
\begin{center}
\begin{tabular}{|c|c|c|c|c|c|c|c|c|c|c|}
    \hline
    Race&Adm.&O-6&O-5&O-4&O-3&O-2&O-1&W-4&W-3&W-2\\\hline
    Nat. Am.&1&9&29&92&154&64&66&7&14&15\\ \hline
    Asian&2&96&225&417&832&347&349&12&51&34 \\ \hline
    Af. Am.&6&167&357&551&1,006&329&396&81&148&142\\\hline
    Pac. Isl.&2&107&263&280&413&127&118&14&38&50\\\hline
    Multi-Race&2&36&123&242&853&311&342&13&19&14\\\hline
    White&192&2,452&4,752&6,517&11,635&4,038&4,038&222&413&371\\
    \hline
\end{tabular}
\end{center}
\end{table}

\begin{table}[H]
\caption{Male Enlisted Population by Race and Rank as of January 20,2021}
\begin{center}
\begin{tabular}{|c|c|c|c|c|c|c|c|c|c|}
    \hline
    Race&E-9&E-8&E-7&E-6&E-5&E-4&E-3&E-2&E-1\\\hline
    Nat. Am&44&210&777&1,863&1,142&379&254&110&56\\ \hline
    Asian&129&466&1,257&2,720&3,405&2,229&2,051&693&452 \\ \hline
    Af. Am.&409&994&3,151&7,514&9,963&6,533&5,968&2,820&2,036\\\hline
    Pac. Isl.&93&308&798&1,387&2,593&3,087&1,922&333&30\\\hline
    Multi-Race&48&162&942&4,679&5,004&2,189&1,279&476&348\\\hline
    White&1,809&4,242&11,591&26,435&34,716&25,716&20,871&9,140&6,502\\
    \hline
\end{tabular}
\end{center}
\label{tab:malenl}
\end{table}

\begin{table}[H]
\caption{Female Naval Officer Population by Race and Rank as of January 20,2021}
\begin{center}
\begin{tabular}{|c|c|c|c|c|c|c|c|c|c|c|}
    \hline
    Race&Adm.&O-6&O-5&O-4&O-3&O-2&O-1&W-4&W-3&W-2\\\hline
    Nat. Am.&0&2&11&12&45&27&32&0&0&0\\ \hline
    Asian&1&28&56&136&323&129&126&0&3&4 \\ \hline
    Af. Am.&0&50&103&226&466&143&141&13&29&38\\\hline
    Pac. Isl.&1&14&39&81&215&55&45&0&8&6\\\hline
    Multi-Race&0&5&33&87&327&130&144&1&3&1\\\hline
    White&13&294&677&1,447&2,955&1,089&1,117&12&21&33\\
    \hline
\end{tabular}
\end{center}
\label{tab:femoff}
\end{table}

\begin{table}[H]
\caption{Female Enlisted Population by Race and Rank as of January 20,2021}
\begin{center}
\begin{tabular}{|c|c|c|c|c|c|c|c|c|c|}
    \hline
    Race&E-9&E-8&E-7&E-6&E-5&E-4&E-3&E-2&E-1\\\hline
    Nat. Am&6&24&114&294&314&173&124&53&23\\ \hline
    Asian&14&71&214&545&863&652&677&198&129 \\ \hline
    Af. Am.&88&915&3,151&2,509&4,453&3,039&2,983&1,050&894\\\hline
    Pac. Isl.&13&50&157&297&779&920&714&132&5\\\hline
    Multi-Race&5&24&181&975&1,536&757&544&165&122\\\hline
    White&120&324&1,267&3,536&7,555&6,669&5,914&2,308&1,748\\
    \hline
\end{tabular}
\end{center}
\label{tab:femenl}
\end{table}

Of note, the table consists of a wide range of cell counts.  In general, cell counts decrease as rank increases which is expected as there are fewer billets for higher ranks thus reflecting the USN's ``up or out'' promotion structure.  Further, we note that there are considerably fewer females than males across all ranks and races.  For this reason, several sub-categories were combined to ensure that the chain runs smoothly. 

In this computation, we consider the following hypotheses:
\[
\begin{array}{cc}
H_0:& \lambda_{ijk}^{XYZ} = 0 \mbox{ for all } i = 1, \ldots, 10, \, j= 1, \ldots , 6, \, k = 1, 2.\\
& \\
 H_1:    & \lambda_{ijk}^{XYZ} \not = 0 \mbox{ for all } i = 1, \ldots, 10, \, j= 1, \ldots , 6, \, k = 1, 2.
\end{array}
\]

We estimated maximum likelihood estimate (MLE) via Iterative Proportional Fitting Procedure (IPFP) under the no-three-way interaction model shown in Tables Tables~\ref{tab:mlemaloff},~\ref{tab:mlemalenl},~\ref{tab:mlefemoff}, and~\ref{tab:mlefemenl}.
\begin{table}[H]
\caption{MLE under the no-three-way interaction model for Male Naval Officer Population by Race and Rank as of January 20,2021}
\begin{center}
{\tiny
\begin{tabular}{|c|c|c|c|c|c|c|c|c|c|c|}
    \hline
    Race&Adm.&O-6&O-5&O-4&O-3&O-2&O-1&W-4&W-3&W-2\\\hline
    Nat. Am.&0.92 &   9.52  & 33.91  & 81.65 &  150.63 &  68.19 &   73.55 &  6.55 & 12.88 & 13.39\\ \hline
    Asian&2.76&  107.26&  238.10&  433.88 &  873.60&  356.39 & 356.20  &11.22  &49.66&  33.90 \\ \hline
    Af. Am.&5.17&  168.82  &346.00 & 517.31 &  926.37 & 292.50 & 333.57 & 83.44 &152.62 &147.43\\\hline
    Pac. Isl.&2.74 & 103.43 & 252.48 & 277.95 &  464.99  &133.31 & 119.59 & 13.02 & 42.01&  49.49 \\\hline
    Multi-Race&1.81  & 34.71 & 129.01  &249.48  & 858.83 & 317.35 & 350.35 & 12.96 & 19.97  &13.15\\\hline
    White&191.60 &2443.25 &4749.50& 6538.72 &11618.57 &4048.27 &4075.74& 221.81 &405.85& 368.64\\
    \hline
\end{tabular}}
\end{center}
\label{tab:mlemaloff}
\end{table}

\begin{table}[H]
\caption{MLE under the no-three-way interaction model for Male Enlisted Population by Race and Rank as of January 20,2021}
\begin{center}
{\tiny
\begin{tabular}{|c|c|c|c|c|c|c|c|c|c|}
    \hline
    Race&E-9&E-8&E-7&E-6&E-5&E-4&E-3&E-2&E-1\\\hline
    Nat. Am&45.58&193.81 &  711.45 & 1835.01 & 1152.64 &  424.38  & 283.95 & 127.27 &  60.72\\ \hline
    Asian&130.32&444.54 & 1173.84 & 2776.32 & 3376.57&  2213.34 & 2047.63 & 695.21 & 446.27 \\ \hline
    Af. Am.&421.94&1383.03 & 4307.92  &7582.58 & 9722.60 & 6169.24 & 5568.08 &2554.54 &1887.83\\\hline
    Pac. Isl.&95.85&291.92 &  748.72 & 1413.27&  2619.39 & 3016.65&  1935.93 & 355.91 &  26.34\\\hline
    Multi-Race&47.62&149.82  & 867.92 & 4694.68 & 5005.78 & 2181.91&  1315.54&  483.06 & 348.02\\\hline
    White&1790.69&3918.87 &10706.15 &26296.14 &34946.02 &26127.49 &21193.87 &9356.01& 6654.81\\
    \hline
\end{tabular}}
\end{center}
\label{tab:mlemalenl}
\end{table}

\begin{table}[H]
\caption{MLE under the no-three-way interaction model for Female Naval Officer Population by Race and Rank as of January 20,2021}
\begin{center}
{\tiny 
\begin{tabular}{|c|c|c|c|c|c|c|c|c|c|c|}
    \hline
    Race&Adm.&O-6&O-5&O-4&O-3&O-2&O-1&W-4&W-3&W-2\\\hline
    Nat. Am.&0.08  & 1.48  & 6.09 &  22.35  & 48.37 &  22.81 &  24.45 & 0.45 & 1.12 & 1.61\\ \hline
    Asian&0.24 & 16.74 & 42.90 & 119.12 & 281.40 & 119.61 & 118.80  &0.78  &4.34&  4.10 \\ \hline
    Af. Am.&0.83 & 48.18 &114.00 & 259.69 & 545.63 & 179.50 & 203.43& 10.56 &24.38 &32.57\\\hline
    Pac. Isl.&0.26 & 17.57&  49.52 &  83.05 & 163.01 &  48.69 &  43.41 & 0.98 & 3.99 & 6.51\\\hline
    Multi-Race&0.19 &  6.29 & 26.99 &  79.52 & 321.17 & 123.65  &135.65&  1.04 & 2.03 & 1.85\\\hline
    White&13.40 &302.75& 679.50 &1425.28 &2971.43 &1078.73 &1079.26& 12.19& 28.15 &35.36\\
    \hline
\end{tabular}}
\end{center}
\label{tab:mlefemoff}
\end{table}

\begin{table}[H]
\caption{MLE under the no-three-way interaction model for Female Enlisted Population by Race and Rank as of January 20,2021}
\begin{center}
{\tiny
\begin{tabular}{|c|c|c|c|c|c|c|c|c|c|}
    \hline
    Race&E-9&E-8&E-7&E-6&E-5&E-4&E-3&E-2&E-1\\\hline
    Nat. Am&4.42 & 40.19&179.55&  321.99  &303.36 & 127.62 &  94.05&   35.73  & 18.28\\ \hline
    Asian&12.68 & 92.46&297.16 & 488.68 & 891.43 & 667.66 & 680.37 & 195.79 & 134.73 \\ \hline
    Af. Am.&75.06 & 525.97&1994.08 &2440.41 &4693.40& 3402.76& 3382.92 &1315.46& 1042.17\\\hline
    Pac. Isl.&10.15 & 66.08&206.28 & 270.73 & 752.61 & 990.35&  700.07&  109.09 &   8.66\\\hline
    Multi-Race& 5.38 & 36.18&255.08 & 959.32& 1534.22 & 764.09  &507.46 & 157.94&  121.98\\\hline
    White&138.32 &647.134&2151.85 &3674.87& 7324.98& 6257.51& 5591.13 &2091.99& 1595.19\\
    \hline
\end{tabular}}
\end{center}
\label{tab:mlefemenl}
\end{table} 

In this study, as the first experiment, we only look at ranks which include Admirals, officers (O-1 through O-6) and W-4, W-3, and W-2. The race category consists of six levels and gender is binary.  This results in a $10 \times 6 \times 2$ contingency table.  The $\chi^2$ test statistics of the observed table is $90.23$. 
Therefore an estimate p-value using the $\chi^2$ distribution with $(10-1)\cdot (6-1) \cdot (2-1)$ is $0.00002$.

We took the sample size $N = 10000$ with 25\% burn-in and thinning 25.  With our MCMC scheme with the set of all basic moves by allowing cell counts to be negative one, we have an estimated p-value for this hypothesis test $0.00001$.  If we set the significance level $\alpha = 0.01$, we reject the null hypothesis and we support the alternative.  Figure \ref{fig:hist} shows that an estimated distribution of $\chi^2$ statistics with the fixed margins for this Navy personnel dataset using our MCMC scheme.
\begin{figure}
    \centering
    \includegraphics[width=0.5\textwidth]{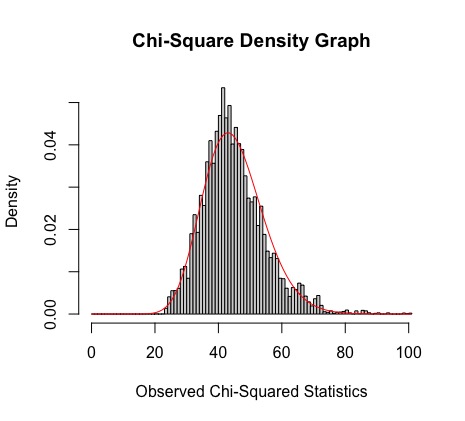}
    \caption{Estimated distribution of $\chi^2$ statistics using our MCMC scheme using the set of basic moves with allowing cell counts to be negative one. The red line is the $\chi^2$ distribution with $(10-1)\cdot (6-5) \cdot (2-1)$ degrees of freedom.}
    \label{fig:hist}
\end{figure}

For the second experiment, we  look at all 19 ranks. The race category consists of six levels and gender is binary.  This results in a $19 \times 6 \times 2$ contingency table.  The $\chi^2$ test statistics of the observed table is $2775.15$. 
Therefore an estimate p-value using the $\chi^2$ distribution with $(19-1)\cdot (6-1) \cdot (2-1)$ is very close to $0$.

We took the sample size $N = 10000$ and thinning 25 with the set of basic moves by allowing cell counts to be $-1$.  For burn-in step, due to its size, we took 250\% burn-in with the set of basic moves by allowing only nonnegative cell counts since this is just a burn-in step and since this chain has much better mixing time.  With our MCMC scheme with the set of all basic moves by allowing cell counts to be negative one, we have an estimated p-value for this hypothesis test $< 0.00001$.  If we set the significance level $\alpha = 0.01$, we reject the null hypothesis and we support the alternative.  Figure \ref{fig:hist2} shows that an estimated distribution of $\chi^2$ statistics with the fixed margins for this Navy personnel dataset using our MCMC scheme.
\begin{figure}
    \centering
    \includegraphics[width=0.5\textwidth]{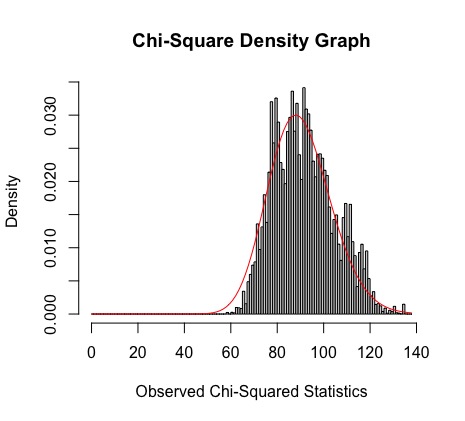}
    \caption{Estimated distribution of $\chi^2$ statistics using our MCMC scheme using the set of basic moves with allowing cell counts to be negative one. The sample size of a Markov chain is $N = 10000$. The red line is the $\chi^2$ distribution with $(19-1)\cdot (6-5) \cdot (2-1)$ degrees of freedom.}
    \label{fig:hist2}
\end{figure}

Since the size of a $19 \times 6 \times 2$ contingency table is a large table, we increased the sample size of a Markov chain to $N = 50000$ while the burn-in sample size is fixed as $B = 2500000$.  Then we have an estimated p-value for this hypothesis test $< 0.000005$.  If we set the significance level $\alpha = 0.01$, we reject the null hypothesis and we support the alternative.  Figure \ref{fig:hist3} shows that an estimated distribution of $\chi^2$ statistics with the fixed margins for this Navy personnel dataset using our MCMC scheme.
\begin{figure}
    \centering
    \includegraphics[width=0.5\textwidth]{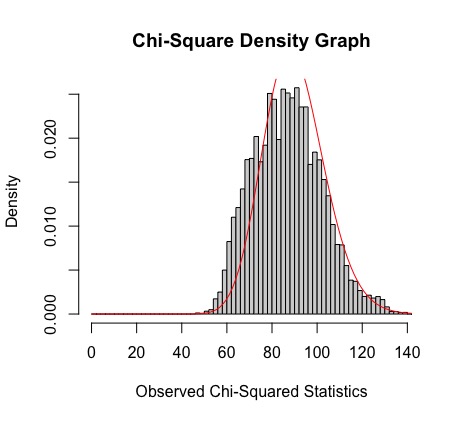}
    \caption{Estimated distribution of $\chi^2$ statistics using our MCMC scheme using the set of basic moves with allowing cell counts to be negative one. The sample size of a Markov chain is $N = 50000$.  The red line is the $\chi^2$ distribution with $(19-1)\cdot (6-5) \cdot (2-1)$ degrees of freedom.}
    \label{fig:hist3}
\end{figure}

\section{Conclusion}

In this paper we conduct MCMC with basic moves by allowing cell counts to be negative one. 
Experimental results imply that for the case of the three-way $I \times J \times K$ tables under the no-three-way interaction model, our MCMC algorithm provides an estimated distribution of $\chi^2$ statistics closer to the asymptotic distribution, $\chi^2$ distribution with the degrees of freedom $(I-1)(J-1)(K-1)$.  

In addition, it is still an open problem to prove that $I \times J \times K$ tables for $I > 3, \, J > 3, \, K>3$ under the no-three-way interaction model moves of $2\times 2\times 2$ minors connect all tables in the fiber if we allow cell counts to be negative one.  Specifically 
\begin{conjecture}
Consider a three way $I \times J\times J$ contingency table
$X_{ijk}$ for $I > 3, \, J > 3, \, K>3$ under the no-three-way interaction model.
If we allow $X_{ijk} \geq -1$, then all moves of $2\times 2\times 2$ minors connect all tables in $\mathcal{F}_{A, b}$, where $A$ is a design matrix for the model and $b$ is a vector of margins computed from the observed table.
\end{conjecture}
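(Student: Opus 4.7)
The principal tool would be the remark following Theorem~\ref{th:main}: to establish the conjecture it is enough to show that for every move $\mathbf{b} = \mathbf{b}_+ - \mathbf{b}_-$ in a Markov basis for the no-three-way interaction model on $I \times J \times K$ tables, the tables $\mathbf{b}_+$ and $\mathbf{b}_-$ can be joined by a sequence of basic $2 \times 2 \times 2$ minor moves along which every intermediate table has every cell count at least $-1$. This is the approach already used for the $3 \times 3 \times K$ case of Lee and the $3 \times 4 \times K$ case earlier in the paper. Note that Theorem~\ref{th:main} itself does \emph{not} apply directly to basic moves, since, as observed after Proposition~\ref{pro:main}, basic moves fail to connect even the fiber containing the all-ones table.

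My first step would be to set up induction on the degree (half the $\ell^1$-norm) of an indispensable move, with the known results for $3 \times 3 \times K$ and $3 \times 4 \times K$ serving as base cases and templates for the reduction step. For an indispensable move $\mathbf{b}$ of large degree, I would look for a basic move $m$ whose support sits inside $\mathrm{supp}(\mathbf{b}_+)$ and whose application produces a new table that differs from $\mathbf{b}_-$ by a Markov basis element of strictly smaller degree, to which the inductive hypothesis applies. Careful bookkeeping is required, because when the degree is large the intermediate table reached after applying $m$ may already have several cells at $-1$, so the next basic move must be chosen to respect the $\geq -1$ bound at every cell.

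An algebraic reformulation would help organize this combinatorial search. Let $J$ be the binomial ideal generated by the basic $2 \times 2 \times 2$ minor moves. The remark rephrases the conjecture as the statement that each binomial $x^{\mathbf{b}_+} - x^{\mathbf{b}_-} \in \mathcal{I}_A$ lies in a suitable saturation of $J$ by the monomial ideal attached to the margin structure of $A$, in the same spirit as the saturation argument used in the proof of Theorem~\ref{th:greater}. A Gr\"obner-basis computation for small families, together with a pattern-recognition argument on the syzygies, might expose a uniform inductive structure that the combinatorial approach alone is unlikely to reveal.

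The principal obstacle is that De Loera's result rules out any finite case-by-case enumeration: the indispensable moves for this model have unbounded degree and support as $I, J, K$ grow, and their global structure is not yet fully understood. Consequently, the inductive reduction must be uniform across an infinite family of moves, and one must simultaneously control both the degree drop and the $\geq -1$ constraint on every cell at every intermediate step. Remark~\ref{rm:indis} already shows that for some $3 \times 4 \times 6$ indispensable moves three cells must be $-1$ simultaneously, and I would expect the required number of simultaneously negative cells to grow with the dimensions; pinning down this scaling --- or, dually, proving an explicit upper bound on it in terms of $I, J, K$ --- is where I expect a genuinely new combinatorial idea will be required.
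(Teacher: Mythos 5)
The statement you are asked to prove is stated in the paper as a \emph{conjecture}: the authors explicitly describe it as an open problem and offer no proof, so there is nothing in the paper to compare your argument against. Your proposal is an honest and well-oriented research program --- the reduction to connecting $\mathbf{b}_+$ and $\mathbf{b}_-$ for each Markov basis element while respecting the $\geq -1$ bound is exactly the device the paper uses for its $3\times 4\times K$ and $2\times J\times K$ results, and your observation that the translation argument makes this reduction sound (because $u-\mathbf{b}_-\geq 0$ whenever the move is applicable at $u$) is correct. But a program is not a proof, and you say as much yourself.

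The genuine gap is the one you name in your final paragraph: the inductive step. Because the indispensable moves of the no-three-way interaction model have unbounded degree and support as $I,J,K$ grow (the De Loera--Onn phenomenon the paper cites), there is no finite list of moves to check, and no one has exhibited a uniform rule that, given an arbitrary indispensable move $\mathbf{b}$, produces a basic move reducing it to a strictly smaller Markov basis element while keeping every intermediate cell $\geq -1$. The paper's own evidence (Remark~\ref{rm:indis}) already shows that three cells must simultaneously sit at $-1$ for certain $3\times 4\times 6$ moves, and nothing currently bounds how this requirement scales; if the required depth of negativity grew past $1$ in higher dimensions, the conjecture would simply be false. Until that uniform reduction --- or a counterexample --- is found, your proposal cannot be completed, which is precisely why the paper leaves the statement as a conjecture. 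If you pursue this, the most promising concrete target suggested by your own outline is the algebraic one: determine whether the ideal generated by the basic-move binomials, saturated at the product of all indeterminates, contains the toric ideal $\mathcal{I}_A$ with saturation exponent $1$; a negative answer for some $I,J,K>3$ would refute the conjecture, and a positive structural argument would prove it.
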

\section*{Acknowledgement}

RY and DB are partially funded by NSF Division of Mathematical Sciences: Statistics Program DMS 1916037. 

\bibliography{refs}

\providecommand{\bysame}{\leavevmode\hbox to3em{\hrulefill}\thinspace}
\providecommand{\MR}{\relax\ifhmode\unskip\space\fi MR }
\providecommand{\MRhref}[2]{%
  \href{http://www.ams.org/mathscinet-getitem?mr=#1}{#2}
}
\providecommand{\href}[2]{#2}
\begin{thebibliography}{CDDH05}

\bibitem[AT03a]{aoki-takemura-2003anz}
S.~Aoki and A.~Takemura, \emph{Minimal basis for a connected {M}arkov chain
  over {$3\times 3\times K$} contingency tables with fixed two-dimensional
  marginals}, Aust. N. Z. J. Stat. \textbf{45} (2003), no.~2, 229--249.
  \MR{MR1983834 (2005e:62090)}

\bibitem[AT03b]{Aoki03thelist}
Satoshi Aoki and Akimichi Takemura, \emph{The list of indispensable moves of
  the unique minimal markov basis for 3 × 4 × k and 4 × 4 × 4 contingency
  tables with fixed two-dimensional marginals}, 2003.

\bibitem[BB00]{Bunea:2000fk}
F.~Bunea and J.~Besag, \emph{{MCMC} in $i \times j \times k$ contingency
  tables}, Monte Carlo Methods (N. Madras, ed.), vol.~26, Fields Institute
  Communications, 2000, pp.~25--36.

\bibitem[BC89]{Besag:1989uq}
J.~Besag and P.~Clifford, \emph{Generalized monte carlo significance tests.},
  Biometrika \textbf{76} (1989), 633--642.

\bibitem[CDDH05]{chen-dinwoodie-dobra-huber2005}
Y.~Chen, I.~Dinwoodie, A.~Dobra, and M.~Huber, \emph{Lattice points,
  contingency tables, and sampling}, Integer points in polyhedra---geometry,
  number theory, algebra, optimization, Contemp. Math., vol. 374, Amer. Math.
  Soc., Providence, RI, 2005, pp.~65--78.

\bibitem[CDHL05]{chen2005}
Y.~Chen, P.~Diaconis, {S. P.} Holmes, and {J. S.} Liu, \emph{Sequential monte
  carlo methods for statistical analysis of tables}, American Statistical
  Association \textbf{100} (2005), no.~469, 109--120.

\bibitem[CDY09a]{Ian2008}
Y.~Chen, {I. H.} Dinwoodie, and R.~Yoshida, \emph{Markov chains, quotient
  ideals, and connectivity with positive margins}, Algebraic and Geometric
  Methods in Statistics (Paolo Gibilisco, Eva Riccomagno, and Maria~Piera
  Rogantin, eds.), Cambridge University Press, 2009, pp.~101 -- 112.

\bibitem[CDY09b]{CDY}
Yuguo Chen, {Ian H.} Dinwoodie, and Ruriko Yoshida, \emph{Markov chains,
  quotient ideals and connectivity with positive margins}, pp.~99--110,
  Cambridge University Press, United Kingdom, Jan 2009 (English (US)),
  Publisher Copyright: {\textcopyright} Cambridge University Press 2010.

\bibitem[DE85]{Diaconis:1985kx}
P.~Diaconis and B.~Efron, \emph{Testing for independence in a two-way table:
  New interpretations of the chi-square statistic (with discussion)}, Ann.
  Statist. \textbf{13} (1985), 845--913.

\bibitem[DO05]{deloera05}
J.~{De Loera} and S.~Onn, \emph{Markov bases of three-way tables are
  arbitrarily complicated}, J. Symb. Comput. \textbf{41} (2005), no.~2,
  173--181.

\bibitem[DS98]{DS}
Persi Diaconis and Bernd Sturmfels, \emph{Algebraic algorithms for sampling
  from conditional distributions}, The Annals of Statistics \textbf{26} (1998),
  no.~1, 363--397.

\bibitem[GT92]{Guo:1992fk}
{S. W.} Guo and {E. A.} Thompson, \emph{Performing the exact test of hardy--
  weinberg proportion for multiple alleles}, Biometrics \textbf{48} (1992),
  361--372.

\bibitem[HAT11]{Hara2011RunningMC}
Hisayuki Hara, Satoshi Aoki, and Akimichi Takemura, \emph{Running markov chain
  without markov bases}, Harmony Of Grobner Bases And The Modern Industrial
  Society, 2011.

\bibitem[Lee06]{10945-59705}
Seungchan Lee, \emph{Markov chain monte carlo and exact conditional tests with
  three-way contingency tables}, 2018-06.

\bibitem[NAV22]{NavyData}
U.S. NAVY, \emph{U.{S}. {N}avy demographic data}, 2022,
  \url{https://www.mynavyhr.navy.mil/Portals/55/Support/21stCenturySailor/Inclusion/Status%20of%20the%20Navy%201%20January%202021.pdf?ver=G416tgluOmYGnKld3mH7Tg%3D%3D}.

\bibitem[RY10]{RY}
F.~Rapallo and R.~Yoshida, \emph{Markov bases and subbases for bounded
  contingency tables}, Ann Inst Stat Math \textbf{62} (2010), 785--805.

\bibitem[Stu96]{Sturmfels1995GrobnerBA}
Bernd Sturmfels, \emph{Gr\"obner basis and convex polytopes}, University
  Lecture Series, vol.~8, AMS, 1996.

\end{thebibliography}
\bibliographystyle{amsalpha}
\end{document}